\newtheorem{thm}{Theorem}[section]
\newtheorem{prop}[thm]{Proposition}
\theoremstyle{definition}
\numberwithin{equation}{section}
\begin{document}

\title[A note on the worst case approach]{A note on the worst case approach for a market with  a stochastic interest rate}

\author[D.Zawisza]{Dariusz Zawisza}

\address{\noindent Dariusz Zawisza, \newline \indent Faculty of Mathematics and Computer Science \newline \indent  Jagiellonian University in Krakow \newline \indent{\L}ojasiewicza  6 \newline \indent 30-348 Krak{\'o}w, Poland }

\email{dariusz.zawisza@im.uj.edu.pl}

\date{\today}

\begin{abstract}
We solve robust optimization problem and show the example of the market model for which the worst case measure is not a martingale measure. In our model the instantaneous interest rate is determined by the Hull-White model and the investor employs the HARA utility to measure his satisfaction.To protect against the model uncertainty he uses the worst case measure approach. The problem is formulated as a stochastic game between the investor and the market from the other side. PDE methods are used to find the saddle point and the precise verification argument is provided.  
\end{abstract}

\subjclass[2010]{ 91G80; 91G10; 91A15; 91A25; 49N90; 49N60 }

\keywords{Robust optimization, stochastic differential games, model
uncertainty, portfolio optimization, martingale measure, Vasicek model, Hull-White model}

\maketitle

{\tiny Published in Appl. Math. (Warsaw), 45 (2018) 151--160, \url{https://doi.org/10.4064/am2348-2-2018}} 

\section{Introduction}

 We consider a portfolio problem embedded into a game theoretic problem. We assume that the investor does not trust his model much and believes it is only the best guess based on existing data. In such situation we say that the investor faces the model uncertainty (or the model ambiguity). In this work we would like to put more light into the portfolio optimization problem under the assumption that the short term interest rate exhibits some stochastic nature. We consider a financial market consisting of $n$ assets and a bank account. The interest rate on the bank account follows the Hull--White model, which is extended version of the Vasicek model. The investor chooses between holding cash in a bank account and holding  risky assets. The same model has been considered first by Korn and Kraft \cite{Korn} but without the model uncertainty assumption. Instead of supposing that we have the exact model, we assume here the whole family of equivalent models, which will be described later. To determine robust investment controls the investor maximizes the total expected  HARA utility of the final wealth after taking the infimum over all possible models. The robust optimization in the diffusion setting has been popularized especially by A. Schied and his coauthors (e.g. Schied \cite{Schied} and references therein). The model ambiguity in the Vasicek model and its extensions has been considered already by Flor and Larsen \cite{Flor}, Sun et al. \cite{Sun}, Munk and Rubtsov \cite{Munk}, Wang and Li \cite{Wang}. However, their objective function is different, because it includes the expression (along the lines of Maenhout \cite{Maenhout}) which penalize the expected utility for divergence from the reference probability measure.  Our model is in fact their limiting model, when their ambiguity coefficients are passing to $+ \infty$ ($0$ respectively).
In the current paper the problem is formulated as a theoretic stochastic game between the market and the investor and the saddle point of this game is determined, despite of the fact we do not include the  penalizing term into the objective function. Moreover, in addition to aferomentioned papers we provide correct and precise verification reasoning. First, we consider the full problem, without any constraints on the set of uncertainty measures. Further, we investigate what are the properties of the restricted model. To solve the game, we use the  Hamilton-Jacobi-Bellman-Isaacs equation. After several substitutions we are able to solve the equation and use suitable version of the verification theorem to justify the method. Previously the same method has been used by Zawisza \cite{Zawisza1}, \cite{Zawisza2}, but in the model with a deterministic  interest rate and with a different objective function.  The major motivation for considering such model is to provide an example in which results of Oksendal and Sulem \cite{Oksendal1},\cite{Oksendal2} do not hold. In the papers they have considered the jump diffusion model but without assuming the stochastic nature of the interest rate, and have discovered that in that game the investor should always choose to invest only in the bank account and at the same time optimal market strategy is to choose a martingale measure. It is interesting because the martingale measure plays prominent role in derivative pricing. Our paper proves that in our framework the worst case measure is different from the martingale measure.

\section{Model description} 
Let $( \Omega, \mathcal{F}, P)$ be a probability space with
filtration $(\mathcal{F}_{t}, 0 \leq t \leq \mathcal{T})$ (possibly enlarged to satisfy usual assumptions) spanned by $n$-dimensional Brownian motion $(W_{t}=(W^{1}_{t},W^{2}_{t},\ldots W^{n}_{t})^{T} ,\; 0 \leq t \leq\mathcal{T})$. We have the initial measure $P$, but our investor concerns model uncertainty, so the measure should be treated only as a proxy for the real life measure. Further, we will consider a whole class of equivalent measures, which will describe the model uncertainty. Our agent has an access to the market with a bank account $(B_{t}, 0 \leq t \leq \mathcal{T})$ and  risky assets $(S_{t}=(S_{t}^{1},S_{t}^{2},\ldots,S_{t}^{n}), 0 \leq t \leq \mathcal{T})$. Under the measure $P$ the system is given by 
\begin{equation} \label{model}
\begin{cases}
dB_{t} &=r_{t} B_{t} dt,  \\ 
dS_{t} &=diag(S_{t}) [(r_{t}e+ \Sigma_{t} \lambda_{t}^{T})   dt +  \Sigma_{t}  dW_{t}],   \\
dr_{t} &= (b_{t}-\kappa_{t}r_{t}) dt + a_{t} dW_{t}.
\end{cases}
\end{equation}
 We assume that $e=(1,1,\ldots,1)$, coefficients $\kappa_{t}$, 
 $b_{t}$, $\lambda_{t}=(\lambda_{t}^{1},\lambda_{t}^{2},\ldots,\lambda_{t}^{n})$, $a_{t}=(a^{1}_{t},a^{2}_{t},\ldots,a^{n}_{t})$, $\Sigma_{t}=[\sigma^{i,j}_{t}]_{i,j=1\ldots n}$ are continuous deterministic functions, and in addition $\Sigma_{t}$ is invertible. For notational convenience we omit the term $a_{t}\lambda_{t}^{T} dt$ in the dynamics for $r$, and we assume it is already included in $b_{t} dt$ term. The representative example for the  process $(S_{t}, t \in [0,\mathcal{T}])$ is the mixed stock-bond model (e.g. Korn and Kraft \cite[Section 2.2]{Korn}):
\[
 \begin{cases}
dS_{t}^{1} &=(r_{t}+\lambda_{t}^{1} \sigma_{t}^{1,1} + \lambda_{t}^{2} \sigma_{t}^{1,2}) S_{t}^{1}  dt + \sigma_{t}^{1,1} S_{t}^{1}  dW_{t}^{1} + \sigma_{t}^{1,2} S_{t}^{1}  dW_{t}^{2},   \\
dS_{t}^{2} &=(r_{t}+\lambda_{t}^{2} \sigma_{t}^{2,2})S_{t}^{2}dt + \sigma_{t}^{2,2} S_{t}^{2}  dW_{t}^{2}, \\
dr_{t} &= (b_{t} -\kappa r_{t}) dt+ a_{t} dW_{t}^{2}.
\end{cases}
\]
Here $S_{t}^{2}$ is the price of the bond in the Vasicek model with the maturity $\mathcal{T}'>\mathcal{T}$, which means that 
$\sigma^{2,2}_{t}=-\frac{a}{\kappa}(1-e^{-\kappa(\mathcal{T}'-t)})$.
 
The portfolio process evolves according to 
\[
dX_{t}^{\pi}=r_{t} X_{t}^{\pi} dt + \pi_{t} \Sigma_{t} \lambda_{t}^{T} X_{t}^{\pi} dt  +    X_{t}^{\pi} \pi_{t} \Sigma_{t} dW_{t}.
\]
The symbol $\mathcal{A}_{t}$ denotes the class of progressively measurable processes $\pi=(\pi^{1},\pi^{2},\ldots,\pi^{n})$ such that 
\[
\int_{t}^{\mathcal{T}} |\pi_{s}|^{2} ds <+\infty \; \text{a.s.}
\]
To describe the model uncertainty or model ambiguity issues we assume that the probability measure is not precisely
known and the investor considers a whole class of possible measures. We follow the approach of Oksendal and Sulem \cite{Oksendal1} or Schied \cite{Schied} 
in defining the set
\begin{equation} \label{radon}
\mathcal{Q}_{\mathcal{T}}:= \biggl\lbrace Q^{\eta}_{\mathcal{T}} \sim P \; \vert \; \frac{dQ^{\eta}_{\mathcal{T}}}{dP} =
\mathcal{E} \biggl( \int \eta_{t} dW_{t}
\biggr)_{\mathcal{T}}\; , \quad \eta \in \mathcal{M}
\biggr\rbrace ,
\end{equation}
where $\mathcal{E}(\cdot)_{t}$ denotes the Doleans-Dade
exponential and $\mathcal{M}$ denotes the set of all,
progressively measurable  processes $\eta=(\eta^{1},\eta^{2},\ldots,\eta^{n})$,
such that 
\[
\mathbb{E} \left[\frac{dQ^{\eta}_{\mathcal{T}}}{dP}\right]^{2} <+\infty.
\]
In the latter part of the paper we assume that the process $\eta$ takes his values in a fixed compact and convex set $\Gamma$.
It is convenient to use the $Q^{\eta}_{\mathcal{T}}$ dynamics of the stochastic system $(X_{t},r_{t})$ i.e.
\begin{equation} 
\begin{cases}
dX_{t}^{\pi}=r_{t} X_{t}^{\pi}dt + \pi_{t}\Sigma_{t}(\lambda_{t}^{T} +\eta_{t}^{T})X_{t}^{\pi} dt  + \pi_{t} \Sigma_{t} X_{t}^{\pi}dW_{t}^{\eta},   \\
dr_{t} = [(b_{t}-\kappa_{t} r_{t}) + a_{t}\eta_{t}^{T} ]dt + a_{t} dW_{t}^{\eta}.
\end{cases}
\end{equation}

Our investor takes into account the model ambiguity and has worst case preferences  (Gilboa and Schmeidler \cite{Gilboa} ), i.e. his aim is to maximize
\begin{equation} \label{problem}
\mathcal{J}^{\pi,\eta}(x,r,t) = \inf_{\eta \in \mathcal{M}} \mathbb{E}_{x,r,t}^{\eta}  U(X_{\mathcal{T}}^{\pi}).
\end{equation}
The symbol $\mathbb{E}_{x,r,t}^{\eta}$ is used to denote the expected value under the measure $Q^{\eta}_{\mathcal{T}}$ when system starts at $(x,r,t)$. Here we assume that $U(x)=\frac{x^{\gamma}}{\gamma}$ with $0<\gamma<1$. The solution for $\gamma <0$ will be the same but due to the fact that $U$ has negative values, it is needed to use few more restrictions and technicalities to complete the proof.  

Here we are interested not only in the optimal portfolio $\pi ^{*}$, but also in the measure $Q_{\mathcal{T}}^{\eta^{*}}$ for which the infimum is attained. Therefore, we are looking for a saddle point $(\pi^{*}, \eta^{*})$ i.e.
\[
\mathcal{J}^{\pi,\eta^{*}}(x,r,t) \leq \mathcal{J}^{\pi^{*},\eta^{*}}(x,r,t) \leq \mathcal{J}^{\pi^{*},\eta}(x,r,t).
\]

\section{The solution}
To solve the problem we will use the Hamilton-Jacobi-Bellman-Isaacs operator given by
\begin{align}
\mathcal{L}^{\pi,\eta}  V (x,r,t) :=  &V_{t}  + \frac{1}{2}
a^{2}_{t} V_{rr} +  \frac{1}{2}  \pi \Sigma_{t} \Sigma_{t}^{T} \pi^{T} x^{2}  V_{xx} +
 \pi \Sigma_{t} a_{t} x V_{xr} \label{operator} \\ &+  \pi \Sigma_{t}(\lambda_{t}^{T} + \eta^{T})
x V_{x}
 + \eta a^{T}_{t}  V_{r} +    (b_{t}-\kappa_{t}r)
V_{r}  +rxV_{x}. \notag
\end{align}

It should be considered together  with the verification theorem. The reasoning behind its proof is of standard type (see for instance Zawisza \cite[Theorem 3.1]{Zawisza1}). Here we present only short sketch, just to emphasis some minor differences. 
\begin{thm}[Verification Theorem]\label{verification_theorem} Suppose there exists a positive function 
\[
V \in \mathcal{C}^{2,2,1}((0,+\infty)\times\mathbb{R} \times [0,\mathcal{T})) \cap \mathcal{C} ([0,+\infty)\times\mathbb{R} \times [0,\mathcal{T}])
\]
and a Markov control 
\[
(\pi^{*}(x,r,t),\eta^{*}(x,r,t))\in\mathcal{A}_{t}\times\mathcal{M},
\]
such that
\begin{align}
&\mathcal{L}^{\pi^{*}(x,r,t),\eta}V(x,r,t) \geq 0 \label{first:in1}, \\
&\mathcal{L}^{\pi,\eta^{*}(x,r,t)}V(x,r,t) \leq 0  \label{second:in1}, \\
&\mathcal{L}^{\pi^{*}(x,r,t),\eta^{*}(x,r,t)}V(x,r,t) = 0\label{third:eq1}, \\
&V(x,r,T)=\frac{x^{\gamma}}{\gamma} \label{terminal:cond1}
\end{align}
\flushright for all $\eta \in \mathbb{R}$, $\pi \in \mathbb{R}$,  $(x,r,t) \in (0,+\infty)\times\mathbb{R} \times [0,\mathcal{T}) $,
\flushleft
and
\begin{equation} \label{uniform}
 \mathbb{E}_{x,r,t}^{\eta} \left[ \sup_{t \leq s \leq \mathcal{T}} \left|V(X_{s}^{\pi^{*}},r_{s},s)\right|\right] < + \infty 
\end{equation}
\flushright for all  $ (x,r,t) \in  [0,+\infty)\times\mathbb{R} \times [0,\mathcal{T}]$,
$\pi \in \mathcal{A}_{t}$, $\eta \in \mathcal{M}$. \flushleft
\medskip

Then
\[J^{\pi,\eta^{*}}(x,r,t) \leq V(x,r,t) \leq J^{\pi^{*},\eta}(x,r,t)\]
\flushright
for all $\pi \in \mathcal{A}_{t}$, $\eta \in \mathcal{M}$,
\flushleft  and
\[ 
V(x,r,t)= J^{\pi^{*},\eta^{*}}(x,r,t). 
\]
\end{thm}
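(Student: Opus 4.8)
The plan is to run the standard martingale/supermartingale argument for stochastic games. Fix an arbitrary admissible pair $(\pi,\eta) \in \mathcal{A}_{t}\times\mathcal{M}$ and a starting point $(x,r,t)$. First I would apply the It\^o formula to the process $s \mapsto V(X_{s}^{\pi^{*}},r_{s},s)$ under the measure $Q^{\eta}_{\mathcal{T}}$, where the state process is driven by the control $(\pi^{*},\eta)$. The drift term that appears is exactly $\mathcal{L}^{\pi^{*},\eta}V(X_{s}^{\pi^{*}},r_{s},s)\,ds$, which by inequality \eqref{first:in1} is nonnegative, so modulo the stochastic integral the process is a submartingale. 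Symmetrically, applying It\^o to $s \mapsto V(X_{s}^{\pi},r_{s},s)$ under $Q^{\eta^{*}}_{\mathcal{T}}$ with state process driven by $(\pi,\eta^{*})$ produces the drift $\mathcal{L}^{\pi,\eta^{*}}V\,ds \le 0$ by \eqref{second:in1}, giving a supermartingale (again up to a local martingale part).

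The next step is to promote these local statements to genuine (sub/super)martingale inequalities in expectation. Here I would introduce a localizing sequence of stopping times $\tau_{n} \uparrow \mathcal{T}$ that reduces the stochastic integral to a true martingale, write the resulting inequality at time $\tau_{n}$, and pass to the limit. For the upper bound, from the submartingale property I get
\[
V(x,r,t) \le \mathbb{E}_{x,r,t}^{\eta}\, V(X_{\tau_{n}}^{\pi^{*}},r_{\tau_{n}},\tau_{n}),
\]
and letting $n \to \infty$, using continuity of $V$ up to $s=\mathcal{T}$ together with the terminal condition \eqref{terminal:cond1}, the right-hand side converges to $\mathbb{E}_{x,r,t}^{\eta} U(X_{\mathcal{T}}^{\pi^{*}}) = J^{\pi^{*},\eta}(x,r,t)$. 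For the lower bound the supermartingale property gives $V(x,r,t) \ge \mathbb{E}_{x,r,t}^{\eta^{*}} V(X_{\tau_{n}}^{\pi},r_{\tau_{n}},\tau_{n}) \to J^{\pi,\eta^{*}}(x,r,t)$ analogously. Finally, taking $\pi = \pi^{*}$, $\eta = \eta^{*}$ and invoking \eqref{third:eq1} shows the process is a true martingale on $[t,\tau_n]$, so both inequalities become equalities and $V(x,r,t) = J^{\pi^{*},\eta^{*}}(x,r,t)$.

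The main obstacle is justifying the interchange of limit and expectation when passing $n \to \infty$ (and, implicitly, that the local martingale part really has zero expectation in the limit). This is precisely where hypothesis \eqref{uniform} enters: the uniform integrability bound $\mathbb{E}_{x,r,t}^{\eta}\big[\sup_{t\le s\le\mathcal{T}}|V(X_{s}^{\pi^{*}},r_{s},s)|\big] < +\infty$ dominates the family $\{V(X_{\tau_{n}}^{\pi^{*}},r_{\tau_{n}},\tau_{n})\}_{n}$, so dominated convergence applies; the same bound also forces the stopped stochastic integrals to converge in $L^{1}$ with vanishing expectation. One has to check the symmetric integrability statement for the $(\pi,\eta^{*})$-controlled process as well, but this is of the same nature. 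A minor additional point is continuity of the state trajectories and measurability/admissibility of the Markov controls $\pi^{*}(\cdot),\eta^{*}(\cdot)$ when composed with the state process, so that the closed-loop system is well posed; since this is the routine part I would only remark on it. As the authors note, the argument is entirely standard, so I would present it compactly and refer to \cite[Theorem 3.1]{Zawisza1} for the details that are identical.
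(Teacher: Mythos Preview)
Your overall plan is the standard verification argument and matches the paper's approach, but there is one genuine gap in how you handle the lower bound $V(x,r,t)\geq J^{\pi,\eta^{*}}(x,r,t)$. You write that after the localization ``the same bound also forces the stopped stochastic integrals to converge'' and that ``one has to check the symmetric integrability statement for the $(\pi,\eta^{*})$-controlled process as well, but this is of the same nature.'' It is not: hypothesis \eqref{uniform} is only assumed for the process $V(X_{s}^{\pi^{*}},r_{s},s)$ with the \emph{optimal} portfolio $\pi^{*}$, not for an arbitrary $\pi\in\mathcal{A}_{t}$. For a general admissible $\pi$ there is no a priori moment bound on $X^{\pi}$, so you cannot invoke dominated convergence for $\mathbb{E}^{\eta^{*}}_{x,r,t}V(X^{\pi}_{\tau_{n}},r_{\tau_{n}},\tau_{n})$.

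The paper resolves this asymmetry by exploiting the positivity of $V$: from $V(x,r,t)\geq \mathbb{E}^{\eta^{*}}_{x,r,t}V(X^{\pi}_{(\mathcal{T}-\varepsilon)\wedge\tau_{n}},r_{(\mathcal{T}-\varepsilon)\wedge\tau_{n}},(\mathcal{T}-\varepsilon)\wedge\tau_{n})$ one passes to the limit via Fatou's Lemma, which requires no integrability hypothesis at all, only nonnegativity. Dominated convergence, supported by \eqref{uniform}, is used only for the direction $V\leq J^{\pi^{*},\eta}$ and for the equality $V=J^{\pi^{*},\eta^{*}}$, both of which involve $\pi^{*}$. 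Once you replace your ``symmetric DCT'' step by Fatou on the supermartingale side, your sketch coincides with the paper's proof; note also the paper's use of the cutoff $(\mathcal{T}-\varepsilon)\wedge\tau_{n}$ to stay inside the open time interval where $V$ is $\mathcal{C}^{2,2,1}$ before letting $\varepsilon\to0$.
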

\begin{proof}
Let us fix first $\pi \in \mathcal{A}_{t}$. Consider $Q^{\eta^{*}}_{\mathcal{T}}$ - dynamics of the system $(X_{t}, r_{t})$ and apply the It\^{o} formula using the function $V$. By using inequality \eqref{second:in1} and taking the expectation from both sides, we obtain
\[
V(x,r,t) \geq \mathbb{E}^{\eta^{*}} V(X_{(\mathcal{T}-\varepsilon)\wedge \tau_{n}}, r_{(\mathcal{T}-\varepsilon)\wedge \tau_{n}}, (\mathcal{T}-\varepsilon)\wedge \tau_{n}),
\]
 where  $(\tau_{n}, n \geq 0)$ is a localizing sequence of stopping times. The function $V$ is positive, thus the Fatou Lemma implies
 \[
V(x,r,t) \geq \mathbb{E}^{\eta^{*}}_{x,r,t} V(X_{\mathcal{T}}^{\pi},r_{\mathcal{T}},\mathcal{T}) =\mathbb{E}^{\eta^{*}}_{x,r,t} U(X_{\mathcal{T}}^{\pi})=J^{\pi,\eta^{*}}(x,r,t).
\]
 To prove the reverse inequality we fix $\eta \in \mathcal{M}$ and consider $Q^{\eta}_{\mathcal{T}}$ - dynamics of the system $(X_{t}, r_{t})$. After applying the It\^{o} rule we get
 \[
V(x,r,t) \leq \mathbb{E}^{\eta}_{x,r,t} V(X_{(\mathcal{T}-\varepsilon)\wedge \tau_{n}}^{\pi^{*}}, r_{(\mathcal{T}-\varepsilon)\wedge \tau_{n}}, (\mathcal{T}-\varepsilon)\wedge \tau_{n})
\]
and the same is true with the equality
\[
V(x,r,t) = \mathbb{E}^{\eta^{*}}_{x,r,t} V(X_{(\mathcal{T}-\varepsilon)\wedge \tau_{n}}^{\pi^{*}}, r_{(\mathcal{T}-\varepsilon)\wedge \tau_{n}}, (\mathcal{T}-\varepsilon)\wedge \tau_{n}).
\]
Property \eqref{uniform} and the dominated convergence theorem finish the proof.
\end{proof}

Following Korn and Kraft \cite{Korn} we predict that conditions \eqref{first:in1} -- \eqref{uniform} are satisfied by the function of the form 
\[V(x,r,t)=\frac{x^{\gamma}}{\gamma} e^{f(t)r+g(t)}, \quad f(\mathcal{T})=0, \; g(\mathcal{T})=0.\]

 Substituting it into \eqref{first:in1}-\eqref{third:eq1} and dividing the expression by $\frac{x^{\gamma}}{\gamma}e^{f(t)r+g(t)}$, we get
 \[H^{(\pi,\eta^{*})}(r,t) \leq H^{(\pi^{*},\eta^{*})} (r,t)=0 \leq H^{(\pi^{*},\eta)} (r,t), \quad \pi, \eta \in \mathbb{R}^{n}.\]
 where 
\begin{multline*}
H^{(\pi,\eta)}(r,t):= f'(t)r+g'(t) + \frac{1}{2}
a^{2}_{t} f^{2}(t) + \frac{1}{2} \gamma (\gamma-1) \pi \Sigma_{t} \Sigma_{t}^{T} \pi^{T}  +  \pi \Sigma_{t} a^{T}_{t} \gamma f(t)  \\ + \gamma \pi 
\Sigma_{t} (\lambda_{t}^{T} + \eta^{T})  + \eta a^{T}_{t} f(t) + (b_{t}-\kappa_{t}r) f(t) + \gamma r .
\end{multline*}
Now, it is possible to determine the saddle point. Suppose first that we already have the saddle point $(\pi^{*},\eta^{*})$. Therefore,
\[H^{(\pi,\eta^{*})}(r,t) \leq H^{(\pi^{*},\eta^{*})}(r,t), \quad \pi,\eta \in \mathbb{R}^{n}\]
 and consequently
\[
\pi^{*}_{t}=\frac{1}{(1-\gamma)} (\lambda_{t} + \eta^{*} + a_{t}f(t))\Sigma_{t}^{-1} .
\]
On the other hand, 
\[H^{(\pi^{*},\eta^{*})}(r,t) \leq H^{(\pi^{*},\eta)}(r,t), \quad \eta \in \mathbb{R}^{n}. \]
We should notice first that $H$ forms a linear function in $\eta$. In that case, the only chance to find $\eta^{*}$ is to delete the expression with $\eta$ i.e.
\[
\gamma \pi^{*} 
\Sigma_{t}  + a_{t} f(t)=0.
\]
This means that 
\[\pi^{*}= - \frac{f(t)}{\gamma} a_{t} \Sigma_{t}^{-1}.
\]
So, we should have
\[
\frac{  f(t)}{(1-\gamma) } a_{t} \Sigma_{t}^{-1} + \frac{\lambda_{t}+\eta^{*}_{t}}{ (1-\gamma)} \Sigma_{t}^{-1}=- \frac{a_{t} f(t)}{ \gamma} \Sigma_{t}^{-1},
\]
which yields
\[
\eta^{*}_{t}= -\lambda_{t} -\frac{f(t)}{\gamma} a_{t}.
\]
Substituting $\pi^{*}$ and $\eta^{*}$ into the equation and using the fact that the expression with $\eta$ is equal to $0$,
 we get
\begin{multline*}
f'(t)r+g'(t) + \frac{1}{2}
|a_{t}|^{2} f^{2}(t) + \frac{1}{2}  |a_{t}|^{2} f^{2}(t)  \frac{(\gamma-1)}{\gamma} -  |a_{t}|^{2}  f(t) \\ - \lambda_{t} a^{T}_{t} f(t) + (b_{t}-\kappa_{t}r) f(t) + \gamma r =0.
\end{multline*}
 Thus,
\[
f'(t) - \kappa_{t} f(t) + \gamma=0,
\] 
 \[
 g'(t)+  \frac{1}{2}
|a_{t}|^{2} f^{2}(t) + \frac{1}{2}  |a_{t}|^{2} f^{2}(t)  \frac{(\gamma-1)}{\gamma} -  |a_{t}|^{2}  f(t) - \lambda_{t} a^{T}_{t} f(t) +  b_{t} f(t) = 0.
 \]
More explicit forms are:
\begin{align*}
f(t)&=\gamma e^{-\int_{t}^{\mathcal{T}} \kappa_{s} ds} \int_{t}^{\mathcal{T}} e^{\int_{k}^{\mathcal{T}} \kappa_{s} ds} dk , \\
g(t)&= \int_{t}^{\mathcal{T}} \left[\frac{1}{2} f^{2}(s)|a_{s}|^{2} + \frac{1}{2}  |a_{s}|^{2} f^{2}(s)  \frac{(\gamma-1)}{\gamma} - |a_{s}|^{2}  f(s) - \lambda_{s} a^{T}_{s} f(s) +  b_{s} f(s)\right]ds.
\end{align*}
We can now summarize our preparatory calculations.
\begin{prop} \label{main}
The pair $(\pi^{*},\eta^{*})$ given by
\[
\pi^{*}_{t}= - \frac{ f(t)}{ \gamma} a_{t} \Sigma^{-1}_{t}, \quad \eta^{*}_{t}=-\lambda_{t} -\frac{ f(t)}{\gamma} a_{t}
\]
is a saddle point for problem \eqref{problem}.
\end{prop}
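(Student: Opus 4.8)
The plan is to show that the function
\[
V(x,r,t)=\frac{x^{\gamma}}{\gamma}\,e^{f(t)r+g(t)}
\]
together with the Markov controls $(\pi^{*},\eta^{*})$ of Proposition~\ref{main} satisfies every hypothesis of the Verification Theorem~\ref{verification_theorem}. Once that is done, the conclusion of that theorem — namely $J^{\pi,\eta^{*}}(x,r,t)\le V(x,r,t)\le J^{\pi^{*},\eta}(x,r,t)$ for all admissible $\pi,\eta$, together with $V=J^{\pi^{*},\eta^{*}}$ — is precisely the saddle point inequality, so nothing further is needed.

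The regularity, positivity and terminal requirements are immediate: $f$ and $g$ are $\mathcal C^{1}$ on $[0,\mathcal T]$ (they are explicit integrals of continuous functions), while $x\mapsto x^{\gamma}$ is smooth and strictly positive on $(0,+\infty)$ because $0<\gamma<1$; hence $V\in\mathcal C^{2,2,1}\cap\mathcal C$, $V>0$, and $f(\mathcal T)=g(\mathcal T)=0$ gives $V(x,r,\mathcal T)=x^{\gamma}/\gamma$. The inequalities \eqref{first:in1}--\eqref{third:eq1} are exactly the preparatory computation read backwards: dividing $\mathcal L^{\pi,\eta}V$ by $V>0$ turns them into $H^{(\pi,\eta^{*})}(r,t)\le H^{(\pi^{*},\eta^{*})}(r,t)=0\le H^{(\pi^{*},\eta)}(r,t)$ for all $\pi,\eta\in\mathbb R^{n}$. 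Here one uses that $H^{(\cdot,\eta^{*})}$ is strictly concave in $\pi$, its quadratic term $\tfrac12\gamma(\gamma-1)\pi\Sigma_{t}\Sigma_{t}^{T}\pi^{T}$ being negative definite since $\gamma(\gamma-1)<0$ and $\Sigma_{t}$ is invertible, so the critical point $\pi^{*}$ is a global maximiser; that $H^{(\pi^{*},\cdot)}$ is \emph{affine} in $\eta$ with the coefficient of $\eta$ chosen to vanish, so it is constant in $\eta$ and the right-hand inequality is in fact an equality; and that $H^{(\pi^{*},\eta^{*})}\equiv 0$ is exactly the content of the two ODEs defining $f$ and $g$. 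Invertibility of $\Sigma_{t}$ is used twice here, to make the form in $\pi$ nondegenerate and to annihilate the coefficient of $\eta$.

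It remains to verify admissibility and the integrability bound \eqref{uniform}. Both $\pi^{*}_{t}=-\tfrac{f(t)}{\gamma}a_{t}\Sigma_{t}^{-1}$ and $\eta^{*}_{t}=-\lambda_{t}-\tfrac{f(t)}{\gamma}a_{t}$ are deterministic and continuous on $[0,\mathcal T]$, hence bounded; thus $\int_{t}^{\mathcal T}|\pi^{*}_{s}|^{2}\,ds<\infty$ so $\pi^{*}\in\mathcal A_{t}$, and $\mathcal E(\int\eta^{*}dW)_{\mathcal T}$ is lognormal, so $\mathbb E[dQ^{\eta^{*}}_{\mathcal T}/dP]^{2}=\exp(\int_{0}^{\mathcal T}|\eta^{*}_{s}|^{2}ds)<\infty$ and $\eta^{*}\in\mathcal M$. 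For \eqref{uniform} the key observation is that the candidate value process $V(X^{\pi^{*}}_{s},r_{s},s)$ degenerates to a constant along the optimal wealth trajectory. Indeed, applying It\^o's formula to $L_{s}:=\gamma\log X^{\pi^{*}}_{s}+f(s)r_{s}+g(s)$ under the $Q^{\eta}$-dynamics, the martingale parts cancel because the diffusion coefficient $\gamma\pi^{*}_{s}\Sigma_{s}=-f(s)a_{s}$ of $\gamma\log X^{\pi^{*}}_{s}$ is the negative of the diffusion coefficient of $f(s)r_{s}$; the $r_{s}$- and $\eta_{s}$-dependent drift terms cancel thanks to $f'-\kappa_{s}f+\gamma=0$ (and to $\pi^{*}_{s}\Sigma_{s}=-\tfrac{f(s)}{\gamma}a_{s}$); and the residual purely deterministic drift vanishes by the ODE for $g$. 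Hence $L_{s}\equiv\gamma\log x+f(t)r+g(t)$ and $V(X^{\pi^{*}}_{s},r_{s},s)\equiv V(x,r,t)$, so $\mathbb E^{\eta}_{x,r,t}[\sup_{t\le s\le\mathcal T}|V(X^{\pi^{*}}_{s},r_{s},s)|]=V(x,r,t)<\infty$ for every $\eta\in\mathcal M$. Theorem~\ref{verification_theorem} then applies and gives the claim.

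I expect the delicate points to be the following. The Isaacs (right-hand) inequality is the least routine one, since $H$ is only affine — not strictly concave — in $\eta$: the usual "differentiate and set the gradient to zero" does not by itself yield a maximiser, and one must instead argue that finiteness of the value forces the coefficient of $\eta$ to vanish identically, which is the step that actually pins down $\pi^{*}$ and then $\eta^{*}$. Secondarily, condition \eqref{uniform} looks troublesome because $\mathcal M$ admits processes $\eta$ that are only square-integrable in the density rather than bounded, so a naive moment estimate on $\sup_{s}V(X^{\pi^{*}}_{s},r_{s},s)$ would have to control arbitrarily large Girsanov drifts; the clean way out is the cancellation described above, after which the bound is trivial.
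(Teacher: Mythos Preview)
Your proof is correct, and for the integrability condition \eqref{uniform} you take a genuinely different route from the paper. The paper verifies \eqref{uniform} by brute force: it changes back to $P$ via Cauchy--Schwarz (using the standing assumption $\mathbb{E}[dQ^{\eta}/dP]^{2}<\infty$), writes $X^{\pi^{*}}_{s}$ and $r_{s}$ explicitly, applies the stochastic Fubini theorem to express $V^{2}(X^{\pi^{*}}_{s},r_{s},s)$ as a square-integrable martingale times $e^{\beta(s)r_{s}+\xi(s)}$ for bounded deterministic $\beta,\xi$, and then invokes standard $\sup$-moment estimates for linear SDEs. Your argument instead notices that $V(X^{\pi^{*}}_{s},r_{s},s)$ is \emph{pathwise constant}: the diffusion of $L_{s}=\gamma\log X^{\pi^{*}}_{s}+f(s)r_{s}+g(s)$ is $\gamma\pi^{*}_{s}\Sigma_{s}+f(s)a_{s}=0$ by the very definition of $\pi^{*}$, and the drift is $H^{(\pi^{*},\eta_{s})}(r_{s},s)$, which is independent of $\eta$ (the $\eta$-coefficient $\gamma\pi^{*}\Sigma_{t}+f(t)a_{t}$ was annihilated) and vanishes by the ODEs for $f$ and $g$. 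Hence \eqref{uniform} is trivial and no estimate is needed at all.

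Your route is shorter and more conceptual: it exploits the full saddle structure of the unconstrained problem, namely that the right-hand Isaacs inequality is an equality here, so $\mathcal{L}^{\pi^{*},\eta}V=0$ for \emph{every} $\eta$, not just $\eta^{*}$. The paper's route, on the other hand, is the one that survives in the constrained setting of Section~4 (and more generally whenever $H^{(\pi^{*},\cdot)}$ is not constant in $\eta$), since then $V(X^{\pi^{*}}_{s},r_{s},s)$ is a genuine sub/supermartingale and one really needs moment bounds uniformly over $\eta\in\mathcal{M}$.
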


\begin{proof}
Note that $\pi^{*}_{t}$ and $\Sigma_{t}$  are deterministic functions.
To complete the proof we need only to verify that 
\[
\mathbb{E}_{x,r,t}^{\eta} \left[ \sup_{t \leq s \leq T} \left|V(X_{s}^{\pi^{*}},r_{s},s)\right|\right] < + \infty, \quad \eta \in \mathcal{M}.
\]
We have
\[
\mathbb{E}_{x,r,t}^{\eta} \left[ \sup_{t \leq s \leq \mathcal{T}} \left|V(X_{s}^{\pi^{*}},r_{s},s)\right|\right] =  \mathbb{E}_{x,r,t} \frac{dQ^{\eta}}{dP} \left[ \sup_{t \leq s \leq \mathcal{T}} V(X_{s}^{\pi^{*}},r_{s},s)\right].
\]
By the Cauchy - Schwarz inequality
\begin{multline*}
\mathbb{E}_{x,r,t} \frac{dQ^{\eta}}{dP} \left[ \sup_{t \leq s \leq \mathcal{T}} V(X_{s}^{\pi^{*}},r_{s},s)\right] \\ \leq \left[\mathbb{E} \left[ \frac{dQ^{\eta}}{dP}\right]^{2} \right]^{\frac{1}{2}} \left [ \mathbb{E}_{x,r,t} \left[ \sup_{t \leq s \leq T} V^{2}(X_{s}^{\pi^{*}},r_{s},s)\right] \right]^{\frac{1}{2}}.
\end{multline*}
The explicit formula for the function $V$ leads to
\[
V(X_{s}^{\pi^{*}}, r_s,s)= \frac{1}{\gamma} \left[X_{s}^{\pi^{*}}\right]^{\gamma} e^{f(s) r_{s} + g(s)}.
\]
The portfolio process $X_{t}$ is a solution to the linear equation, so 
\[
X_{s}^{\pi^{*}}=x e^{\int_{t}^{s} [r_{l} + \pi_{l}^{*}\Sigma_{l} \lambda_{l}^{T}  -\frac{1}{2}(\pi_{l}^{*} \Sigma_{l} \Sigma_{l}^{T} \pi_{l}^{T*}  )]dl +  \int_{t}^{s} \pi_{l}^{*}  \Sigma_{l}  dW_{l}}.
\]
Note that the process $\zeta_{s}=e^{\int_{t}^{s}\kappa_{l} dl} r_{s}$  has the dynamics
\[
d\zeta_{s}=e^{ \int_{t}^{s} \kappa_{l} dl} b_{s} ds+ e^{\int_{t}^{s} \kappa_{l} dl} a_{s}  dW_{s}.
\]
We have
\[
r_{s}= e^{-\int_{t}^{s}\kappa_{l} dl} \left[ r+ \int_{t}^{s} b_{l}  dl +\int_{t}^{s} a_{l} dW_{l}  \right].
\]

By the stochastic Fubini theorem, the expression $V^{2} (X_{s}^{\pi^{*}},r_s,s)$ can be rewritten in the form
\[
V^{2}(X_{s}^{\pi^{*}}, r_s,s)=x Z_{s} e^{\beta(s)r_{s} + \xi(s)},
\]
where the process $Z_{s}$ is a square integrable martingale, $\beta$, $\xi$   
are bounded and deterministic functions. 

After repeating the Cauchy - Schwarz inequality once more it is now sufficient to prove that for any bounded deterministic function  $\hat{\beta}$ we have
\begin{equation} \label{property}
\mathbb{E}_{r,t} \sup_{t \leq s \leq T} e^{\hat{\beta}(s) \zeta_{s}} <+\infty.
\end{equation}
Note that the following inequality is true
\[
e^{\hat{\beta}(s) \zeta_{s}} \leq e^{\hat{\beta}_{max} \zeta_{s}}  + e^{\hat{\beta}_{min} \zeta_{s}}, 
\]
where
\[
\hat{\beta}_{max} = \max_{t \leq s \leq T}\hat{\beta}(s), \quad \hat{\beta}_{min} = \min_{t \leq s \leq T} \hat{\beta}(s).
\]
Both processes $e^{\hat{\beta}_{max} \zeta_{s}}$, $e^{\hat{\beta}_{min} \zeta_{s}}$ are solutions to linear equations with bounded coefficients and thus usual Lipschitz and linear growth conditions are satisfied. Property \eqref{property} follows from standard estimates for stochastic differential equations (see Pham \cite[Theorem 1.3.16]{Pham}).
\end{proof}

{\bf Concluding remarks}

To conclude the result we show that the measure $Q^{\eta^{*}}_{\mathcal{T}}$ is not a martingale measure i.e. the process $S_{t} e^{-\int_{0}^{t} r_{s} ds}$ is not a $Q^{\eta^{*}}_{\mathcal{T}}$ - martingale. To see this, it is sufficient to write $Q^{\eta^{*}}_{\mathcal{T}}$ dynamics of $S_{t}$:
\[
dS_{t} =diag(S_{t}) \left[\left[r_{t}e-\frac{ f(t) }{\gamma}\Sigma_{t}  a^{T}_{t}\right]  dt + \Sigma_{t}  dW_{t} \right].
\]
At the end, it is worth to compare the robust investment strategy
\[
\pi^{*}_{t}= \frac{1}{(1-\gamma)}(\lambda_{t} + \eta^{*}_{t} + f(t) a_{t} )\Sigma_{t}^{-1}, \quad \eta^{*}_{t}=-\lambda_{t} -\frac{ f(t)}{\gamma} a_{t}
\]
with the solution to the traditional utility maximization problem
\[
\pi^{*}_{t}=\frac{1}{(1-\gamma)}(\lambda_{t} + f(t) a_{t} )\Sigma_{t}^{-1}.
\]  
It is worth noticing as well that  $\pi^{*}$ can be rewritten as 
\[\pi^{*}_{t}= - \frac{f(t)}{\gamma} a_{t} \Sigma_{t}^{-1}= - e^{-\int_{t}^{\mathcal{T}} \kappa_{s} ds} \int_{t}^{\mathcal{T}} e^{\int_{k}^{\mathcal{T}} \kappa_{s} ds} dk \; \left[ a_{t} \Sigma_{t}^{-1} \right].
\]
and it does not depend on the risk aversion coefficient $\gamma$. The same property is true for $\eta^{*}$.

\section{Model uncertainty with restrictions}

From the  practitioner's point of view, it might be interesting to solve the problem with restrictions imposed on the uncertainty set $\mathcal{M}$. 
In this section  we assume that the  class $\mathcal{M}$ consists of all progressively measurable processes taking values in a compact and convex fixed set $\Gamma \subset \mathbb{R}^{n}$.

We can use the same function $H$ 
\begin{multline*}
H^{(\pi,\eta)}(r,t)= f'(t)r+g'(t) + \frac{1}{2}
a^{2}_{t} f^{2}(t) + \frac{1}{2} \gamma (\gamma-1) \pi \Sigma_{t} \Sigma_{t}^{T} \pi^{T}  +  \pi \Sigma_{t} a^{T}_{t} \gamma f(t)  \\ + \gamma \pi 
\Sigma_{t} (\lambda_{t}^{T} + \eta^{T})  + \eta a^{T}_{t} f(t) + (b_{t}-\kappa_{t}r) f(t) + \gamma r .
\end{multline*}

To find the explicit saddle point for the function $H$, we start with solving the upper Isaacs equation 
\begin{equation} \label{upper}
\min_{\eta \in \Gamma} \max_{\pi \in \mathbb{R}^{n}} H^{(\pi,\eta)}(r,t)=0.
\end{equation}
Furthermore, we use known results on max-min theorems (Fan \cite[Theorem 2]{Fan}) to verify
\[
\min_{\eta \in \Gamma} \max_{\pi \in \mathbb{R}^{n}} H^{(\pi,\eta)}(r,t)= \max_{\pi \in \mathbb{R}^{n}}\min_{\eta \in \Gamma} H^{(\pi,\eta)}(r,t).
\]
We can determine a saddle point candidate $(\pi^{*},\eta^{*})$
by finding a Borel measurable function $\eta^{*}$, such that
\[
 \min_{\eta \in \Gamma} \max_{\pi \in \mathbb{R}}  H^{(\pi,\eta)}(r,t)= \max_{\pi \in \mathbb{R}}   H^{ (\pi,\eta^{*})}(r,t)
\]
and a Borel measurable function $\pi^{*}$, such that
\[
  \min_{\eta \in \Gamma} \max_{\pi \in \mathbb{R}}  H^{(\pi,\eta)}(r,t)=\min_{\eta \in \Gamma} H^{(\pi^{*},\eta)}(r,t).
\]
Because the variable  $\eta$ is separated from $r$, equation \eqref{upper} can be split into two equations (the first one  has already been solved): 
\[
f(t)=\gamma e^{-\int_{t}^{\mathcal{T}} \kappa_{s} ds} \int_{t}^{\mathcal{T}} e^{\int_{k}^{\mathcal{T}} \kappa_{s} ds} dk,
\]
and

\begin{align*}
&g'(t) + \frac{1}{2} |a_{t}|^{2} f^{2}(t)+ b_{t} f(t) \\ & + \min_{\eta \in \Gamma} \left[
-\frac{1}{2} \frac{\gamma}{1-\gamma} | \lambda_{t} + \eta + f(t)a_{t} |^{2}+ \frac{\gamma}{1-\gamma} (\lambda_{t} + \eta +  f(t)a_{t})(\lambda_{t} + \eta)^{T} +  f(t) a_{t} \eta^{T}\right]=0.
\end{align*}

Therefore, to find $\eta^{*}$, it is sufficient to determine any Borel measurable minimizer to the expression
\begin{equation} \label{minimizer}
-\frac{1}{2} \frac{\gamma}{1-\gamma} | \lambda_{t} + \eta +  f(t)a_{t}|^{2}+ \frac{\gamma}{1-\gamma} (\lambda_{t} + \eta +  f(t)a_{t})(\lambda_{t} + \eta)^{T} + f(t)a_{t} \eta^{T} . 
\end{equation}
Now, let $\pi^{*}$ be a Borel measurable maximizer of the function
\[
  \min_{\eta \in \Gamma} H^{(\pi,\eta)}(r,t).
\]
Then, $(\pi^{*},\eta^{*})$ is a saddle point for the function $H^{(\pi,\eta)}(r,t)$. In particular, 
\[
H^{(\pi,\eta^{*})}(r,t) \leq H^{(\pi^{*},\eta^{*})}(r,t), \quad \pi \in \mathbb{R}^{n}.
\]
The unique function $\pi^{*}$ which satisfy the above condition is given by 
 \[
\pi^{*}_{t}= \frac{1}{(1-\gamma)}(\lambda_{t} + \eta^{*}_{t} +  f(t)a_{t})\Sigma_{t}^{-1}.
\]

\begin{prop}
Suppose that $\eta^{*}$ is a minimizer of \eqref{minimizer} and
\[
\pi^{*}_{t}= \frac{1}{(1-\gamma)}(\lambda_{t} + \eta^{*}_{t} +  f(t)a_{t})\Sigma_{t}^{-1}.
\]
Then the pair $(\pi^{*},\eta^{*})$
is a saddle point for  problem \eqref{problem} with the restrictions imposed by the set $\Gamma$.
\end{prop}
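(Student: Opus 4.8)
The plan is to reduce everything to the Verification Theorem \ref{verification_theorem}, exactly as in the proof of Proposition \ref{main}. The discussion preceding the statement already shows that, with $V(x,r,t)=\frac{x^{\gamma}}{\gamma}e^{f(t)r+g(t)}$ and $f,g$ the solutions of the two split equations, the pair $(\pi^{*},\eta^{*})$ is a saddle point of the map $(\pi,\eta)\mapsto H^{(\pi,\eta)}(r,t)$ for every fixed $(r,t)$; multiplying the saddle-point inequalities back by $\frac{x^{\gamma}}{\gamma}e^{f(t)r+g(t)}>0$ gives precisely \eqref{first:in1}--\eqref{third:eq1}, while $f(\mathcal{T})=g(\mathcal{T})=0$ gives the terminal condition \eqref{terminal:cond1}. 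Hence the only hypothesis of Theorem \ref{verification_theorem} that has to be re-examined in the restricted setting is the uniform integrability bound \eqref{uniform}, together with the admissibility requirements $\pi^{*}\in\mathcal{A}_{t}$ and $\eta^{*}\in\mathcal{M}$.

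First I would fix a Borel measurable selection $t\mapsto\eta^{*}_{t}$ of the set of minimizers of \eqref{minimizer}. This is legitimate because the minimand in \eqref{minimizer} is jointly continuous in $(t,\eta)$ and $\Gamma$ is compact, so a measurable minimizer exists by a standard measurable selection theorem; moreover $\eta^{*}_{t}$ is a function of $t$ alone (the coefficients $\lambda_{t},a_{t},f(t)$ entering \eqref{minimizer} are deterministic), hence deterministic, and it is bounded since $\Gamma$ is compact. Consequently $\pi^{*}_{t}=\frac{1}{1-\gamma}(\lambda_{t}+\eta^{*}_{t}+f(t)a_{t})\Sigma_{t}^{-1}$ is a bounded measurable deterministic function, so $\pi^{*}\in\mathcal{A}_{t}$, and the boundedness of $\eta^{*}$ yields Novikov's condition, so $\mathcal{E}(\int\eta^{*}_{t}dW_{t})$ is a true martingale with $\mathbb{E}[dQ^{\eta^{*}}_{\mathcal{T}}/dP]^{2}<+\infty$, i.e. $\eta^{*}\in\mathcal{M}$.

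With $\pi^{*}$ deterministic, the verification of \eqref{uniform} is identical to the one carried out in the proof of Proposition \ref{main}: $X^{\pi^{*}}_{s}$ has the explicit exponential representation, $r_{s}$ is the explicit solution of its linear SDE, and by the stochastic Fubini theorem $V^{2}(X^{\pi^{*}}_{s},r_{s},s)=xZ_{s}e^{\beta(s)r_{s}+\xi(s)}$ with $Z$ a square integrable martingale and $\beta,\xi$ bounded deterministic; note that $g$, although it now solves the ODE containing the $\min_{\eta\in\Gamma}$ term rather than the one from Section~3, is still bounded and deterministic because the minimum of a jointly continuous function over the compact set $\Gamma$ is continuous in $t$. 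Two applications of the Cauchy--Schwarz inequality then reduce \eqref{uniform} to property \eqref{property}, which follows from the standard linear-growth estimates for the scalar SDE satisfied by $\zeta_{s}=e^{\int_{t}^{s}\kappa_{l}dl}r_{s}$. Feeding all of this into Theorem \ref{verification_theorem} gives $\mathcal{J}^{\pi,\eta^{*}}(x,r,t)\le V(x,r,t)\le\mathcal{J}^{\pi^{*},\eta}(x,r,t)$ for all admissible $\pi$ and all $\Gamma$-valued $\eta$, together with $V=\mathcal{J}^{\pi^{*},\eta^{*}}$, i.e. $(\pi^{*},\eta^{*})$ is the sought saddle point.

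The only genuinely new point compared with the unrestricted case is the existence of a \emph{measurable} minimizer $\eta^{*}$ of \eqref{minimizer} and the observation that it stays deterministic and bounded; once this is in place the integrability analysis transfers verbatim, so I expect no real obstacle beyond invoking the measurable selection argument and checking that the modified function $g$ retains boundedness.
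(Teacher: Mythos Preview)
Your proposal is correct and follows exactly the approach the paper intends: the paper omits the proof precisely because it is a repetition of the steps from Proposition~\ref{main}, and you have reproduced those steps faithfully, adding only the minor (and appropriate) observations about measurable selection of $\eta^{*}$, its boundedness on the compact set $\Gamma$, and the resulting admissibility of $(\pi^{*},\eta^{*})$. Nothing further is needed.
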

 
The proof is omitted because it is the repetition of the steps from the proof of Proposition \ref{main}.

{\bf Acknowledgements:} I would like to express my gratitude to the Referee for helping me to improve the first version of the paper.


\begin{thebibliography}{HD}
\normalsize

\baselineskip=17pt


\bibitem{Fan}K. Fan, \emph{Minimax theorems},  Proc. Nat. Acad. Sci. U. S. A.,  39 (1953), 42 -- 47.

\bibitem{Flor} C. Flor, L. Larsen, \emph{Robust Portfolio Choice with Stochastic Interest Rates}, Ann. Finance 10 (2014), 243 -- 265.




\bibitem{Gilboa} I. Gilboa, D. Schmeidler, \emph{Maxmin expected utility with non-unique prior}, J. Math. Econ. 18 (1989),
    141 -- 153.


\bibitem{Korn} R. Korn, H. Kraft, \emph{A stochastic control approach to portfolio problems with stochastic iterest rates}, SIAM J. Control Optim. 40 (2001), 1250 -- 1269.

\bibitem{Maenhout} P. J. Maenhout, \emph{Robust portfolio rules and asset pricing}, Rev. Financ. Stud. 17 (2004), 951 -- 983.

\bibitem{Munk} C. Munk, B. Rubtsov, \emph{Portfolio management with stochastic interest rates and inflation ambiguity}, Ann. Finance 10 (2014), 419 -- 455.

\bibitem{Oksendal1}   B. {\O}ksendal, A. Sulem, \emph{A game theoretic approach to martingale measures in incomplete markets},
Survey of Applied and Industrial Mathematics (TVP Publishers,
Moscow) 15 (2008), 18--24.

\bibitem{Oksendal2}B. {\O}ksendal, A. Sulem. \emph{Robust stochastic
control and equivalent martingale measures}, in: Stochastic Analysis with Financial Applications, Progr. Probab. 65, Birkhäuser/
Springer Basel, Basel, 2011, 179 -- 189



\bibitem{Pham}  H. Pham, \emph{Continuous-time Stochastic Control and Optimization with Financial Applications}, Stochastic Modelling and Applied Probability, Springer-Verlag, Berlin, 2009.
 

\bibitem{Schied} A. Schied, \emph{Robust optimal control for a consumption-investment problem}, Math. Methods. Oper. Res. 67 (2008), 1-- 20.



\bibitem{Sun} J. Sun, Y. Li, L. Zhang, \emph{Robust portfolio choice for a defined contribution pension plan with stochastic income and interest rate},  Comm. Statist. Theory Methods 47
(2018), 4106 -- 4130. 

\bibitem{Wang} P. Wang, Z. Li \emph{Robust optimal investment strategy for an AAM of DC pension plans with stochastic interest rate and stochastic volatility}, Insur. Math. Econ. 80 (2018), 67 -- 83.

\bibitem{Zawisza1} D. Zawisza, \emph{Robust portfolio selection under exponential preferences}, Appl. Math. (Warsaw) 37 (2010), 215 -- 230.

\bibitem{Zawisza2} D. Zawisza, \emph{Target achieving portfolio under model misspecification: quadratic optimization framework }, Appl. Math. (Warsaw) 39, 425 -- 443 (2012).
\end{thebibliography}
\end{document}